\newtheorem{lemma}{Lemma}
\newtheorem{corollary}{Corollary}
\newtheorem{theorem}{Theorem}
\title{A short note on the counting complexity of conjunctive queries}
\author{Stefan Mengel\\ Univ. Artois, CNRS, \\Centre de Recherche en Informatique de Lens (CRIL),\\ F-62300 Lens, France}
\begin{document}

\maketitle

\begin{abstract}
 This note closes a minor gap in the literature on the counting complexity of conjunctive queries by showing that queries that are not free-connex do not have a linear time counting algorithm under standard complexity assumptions. More generally, it is shown that the so-called quantified star size is a lower bound for the exponent in the runtime of any counting algorithm for conjunctive queries.
\end{abstract}

\section{Introduction}

Conjunctive queries are the most basic and arguably most important class of database queries~\cite{ChandraM77}. As such, they have been one of the major objects of study of database theory and are by now well understood in many respects. 

In this short note, we close a minor gap in the literature of the counting complexity of conjunctive queries. This area has seen intensive study in the last few years, see e.g.~\cite{PichlerS13,BraultBaron13,DurandM14,DurandM15,GrecoS14,ChenM15,DellRW19} for a sample. The main difference between the counting complexity of CQs and the decision problem for Boolean CQs is that projection, also called existential quantification, matters for the complexity of queries: it was already observed by Pichler and Skritek in~\cite{PichlerS13} that there are classes of acyclic CQs for which counting answers is $\mathsf{\#P}$-hard whereas Yannakakis' algorithm shows that Boolean CQs can be decided efficiently. The understanding of counting was refined in~\cite{DurandM14} by introducing a notion called \emph{quantified star size} that is shown for acyclic queries to determine the (parameterized) complexity of answer counting. Note that quantified star size can be seen as a quantitative generalization of free-connex acyclicity, previously introduced in the setting of enumeration algorithms~\cite{BaganDG07}, in the sense that the free-connex acyclic queries are exactly the acyclic queries of quantified star size $1$.

The work of Dell, Roth and Wellnitz~\cite{DellRW19} gave a sweeping generalization of~\cite{DurandM14}, in particular introducing the perspective of fine-grained complexity in the setting which allows to give tight results on the exponent of the polynomial runtimes. This in particular allows considering the complexity of individual queries instead of whole classes of them as they have to be considered in the parameterized complexity approach. Concretely, \cite{DellRW19} shows that a parameter called \emph{dominating star size} lower bounds the exponent of counting algorithms for CQs. 

The very interesting results of~\cite{DellRW19} have two slight shortcomings: first, the complexity is measured in the size of the domain of the input database. This is very common in many areas of theoretical computer science, in particular most work on graph algorithms does this. However, in database theory it is common to measure complexity with respect to the size of the database instead. It is not hard to see that the dominating star size of~\cite{DellRW19} does not determine the counting complexity of CQs when measured in the size of the input database which is a slight mismatch between~\cite{DellRW19} and most of the literature in database theory. Second, the results of~\cite{DellRW19} only consider cases in which the dominating star size of the query is at least $3$. This in particular makes it impossible to determine the queries in which linear time counting is possible. This is unfortunate since, due to very big input databases, differentiating between problems that have linear time algorithms and such that have not is often particularly interesting in database theory, see e.g.~the great amount of work in this direction on enumeration algorithms~\cite{BerkholzGS20}.

In this short note, we slightly improve some of the results of~\cite{DellRW19} in the directions pointed out above: we show that for acyclic CQs the original star size as introduced in~\cite{DurandM14} is a lower bound for the exponent of the runtime for counting when measuring in the size of the database. In particular, this is also true for quantified star size $1$ and $2$ which allows us to show that the free-connex CQs are the only self-join free acyclic CQs which allow linear time counting. Let us stress that the the techniques used to show these results are neither new nor surprising; essentially, we tie together several techniques from the literature here.

\section{Preliminaries}

We assume that the reader is familiar with the basics of database theory~\cite{AbiteboulHV95}. We only consider conjunctive queries which we generally assume to be self-join free. The hypergraph of a conjunctive query $q$ has as vertices the variables of $q$ and an edge $e$ for every atom which contains exactly the variables appearing in the atom.

A \emph{join tree} of a hypergraph $H$ is a tree $T$ whose vertices are the edges of $H$ and which has the following property: for every $v\in V(H)$ the set $\{e\in E(H) \mid v\in e\}$ is connected in $T$.
A hypergraph $H$ is called \emph{acyclic} if it has a join tree. A CQ is called acyclic if its hypergraph is acyclic.

All graphs in this note are simple and undirected.
A dominating set $S$ of a graph $G=(V,E)$ is a set $S\subseteq V$ such that every vertex not in $S$ has a neighbor in $S$. The problem $k$-Dominating Set (short $k$-DS) is to decide, given a graph $G$, if $G$ has a dominating set of size at most $k$. We will use the following result from~\cite{PatrascuW10}.

\begin{theorem}\label{thm:PatrascuW10}
 If SAT has no algorithm with runtime $O(2^{n(1-\epsilon)})$ for any $\epsilon>0$, then there is no constant $\epsilon'$ such that there is a constant $k$ and an algorithm for $k$-DS with runtime $O(n^{k-\epsilon'}) $ on graphs with $n$ vertices.
\end{theorem}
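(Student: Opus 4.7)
My plan is to prove the contrapositive via a fine-grained reduction from CNF-SAT to $k$-DS. Suppose we had a constant $\epsilon' > 0$, a constant $k$, and an algorithm that solves $k$-DS on graphs with $N$ vertices in time $O(N^{k-\epsilon'})$. Given a SAT formula $\varphi$ on $n$ variables and $m$ clauses I would, using the same $k$, construct in polynomial time a graph $G_\varphi$ on $N = \Theta(2^{n/k})$ vertices such that $\varphi$ is satisfiable iff $G_\varphi$ has a dominating set of size at most $k$. Running the hypothetical algorithm on $G_\varphi$ would then solve SAT in time $O\bigl(2^{(n/k)(k-\epsilon')}\bigr) = O\bigl(2^{n(1 - \epsilon'/k)}\bigr)$, contradicting the SAT lower bound for $\epsilon := \epsilon'/k > 0$.

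For the construction I would partition the variables into $k$ blocks $B_1,\ldots,B_k$ of size $n/k$. For each block $B_i$ introduce a clique $V_i$ consisting of one vertex $v_{i,\alpha}$ per partial assignment $\alpha$ of $B_i$, for a total of $2^{n/k}$ vertices. To force any dominating set of size $\le k$ to contain exactly one vertex from every $V_i$, attach two \emph{guard} vertices $x_i, y_i$ to each group, each adjacent to all of $V_i$ and to nothing else. Finally, for every clause $C$ add a vertex $u_C$ and connect it to every $v_{i,\alpha}$ such that $\alpha$ satisfies at least one literal of $C$ falling into $B_i$.

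For correctness, if $\varphi$ is satisfied by $\alpha_1 \cup \cdots \cup \alpha_k$, then $D := \{v_{i,\alpha_i} : 1 \le i \le k\}$ dominates every $V_i$ (clique structure), every guard $x_i, y_i$ (adjacency), and every clause vertex $u_C$ (by the neighborhood condition, since some $\alpha_i$ satisfies $C$). Conversely, let $D$ be a dominating set of size $\le k$ and $j$ the number of blocks $i$ with $V_i \cap D = \emptyset$; since $x_i$ and $y_i$ have no neighbors outside $V_i$, such blocks force $\{x_i, y_i\} \subseteq D$, so $|D| \ge 2j + (k - j) = k + j$, forcing $j = 0$ and $|D| = k$ with all of $D$ spent on one vertex $v_{i,\alpha_i}$ per block. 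The combined assignment then must satisfy every clause by the neighborhood condition on the $u_C$'s.

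The step I expect to be the main obstacle is precisely this budget-accounting, in particular ruling out alternative size-$k$ dominating sets that mix guard vertices and clause vertices in unexpected ways; the inequality $|D| \ge k + j$ is what makes the argument tight, and a key design choice is that guard vertices have no edges to anything outside their own group. A secondary technicality is that a priori $N = k \cdot 2^{n/k} + m + 2k$, so to reach $N = \Theta(2^{n/k})$ in full generality one first invokes the Sparsification Lemma to assume $m = O(n)$; without this the bound on $N$, and hence on the resulting SAT runtime, is weakened.
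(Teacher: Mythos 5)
Note first that the paper offers no proof of Theorem~\ref{thm:PatrascuW10}: it is imported verbatim from~\cite{PatrascuW10}, so there is no in-paper argument to compare against. Your reduction is, in substance, exactly the one from that reference: split the $n$ variables into $k$ blocks, introduce one vertex per partial assignment per block, force one pick per block with gadget vertices, and use clause vertices to verify satisfaction. The construction and the correctness argument are sound; in particular the budget count $|D|\ge (k-j)+2j=k+j$, which forces $j=0$ and rules out dominating sets that spend their budget on guards or clause vertices, is exactly the right accounting, and the runtime arithmetic $2^{(n/k)(k-\epsilon')}=2^{n(1-\epsilon'/k)}$ is correct.

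Two caveats on the technicalities you flag at the end. First, the Sparsification Lemma applies to $k$-CNF of \emph{bounded clause width}, not to general CNF-SAT, so you cannot invoke it to assume $m=O(n)$ here. The standard fix is different: the clause vertices contribute only an additive $m$ to $N$, hence at most a $\mathrm{poly}(m)$ multiplicative factor to the final runtime, and the hypothesis as actually used in~\cite{PatrascuW10} (and throughout fine-grained complexity) tolerates polynomial factors in the input size; the paper's phrasing of the hypothesis is simply informal on this point. Second, your reduction---like the original---only works for $k\ge 3$: the cliques $V_i$ and the guard attachments give the constructed graph $\Theta(N^2)$ edges, so for $k=2$ merely writing the graph down already costs time $\Theta(2^{n})$, exceeding the target $O(2^{n(1-\epsilon)})$. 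This is precisely why~\cite{PatrascuW10} state the result for $k\ge 3$; the restatement in the paper drops that restriction, but its only application (Lemma~\ref{lem:stars}, which produces a large $k'$) is unaffected. Apart from these two points, which you partially anticipated, your proposal is the correct and standard proof of the cited theorem.
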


We remark that the assumption on SAT in Theorem~\ref{thm:PatrascuW10} is in particular implied by the strong exponential time hypothesis (SETH)~\cite{ImpagliazzoP01}. As a consequence, our results below could also be with SETH. However, we keep the formulation as in~\cite{PatrascuW10}.

\section{Star Queries}

In this section, we will refine the lower bounds for the star queries that have already been a crucial building block of the lower bounds in~\cite{DurandM14,DurandM15,DellRW19}.

\begin{lemma}\label{lem:stars}
 Let $k\in \mathbb{N}$ with $k\ge 2$. If there is an algorithm that counts answers to \[q^\star_k(x_1, \ldots, x_k) := \exists z \bigwedge_{i\in [k]} R(x_i,z)\] in time $O(m^{k-\epsilon})$ on databases with $m$ tuples for some $\epsilon>0$, then there is a $k'\in \mathbb{N}$ such that $k'$-DS can be decided in time $O(n^{k'-\epsilon})$ on graphs with $n$ vertices.
\end{lemma}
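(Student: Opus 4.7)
My plan is to reduce $k'$-Dominating Set to counting answers of $q^\star_k$, choosing $k' := k\ell$ for $\ell := \lceil k/\epsilon \rceil$. The key idea is to let each of the $k$ free variables of the star query range over ordered $\ell$-tuples of vertices of the input graph, so that the $k$ free variables jointly describe a multiset of $k\ell = k'$ candidate dominators.

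Concretely, given a graph $G = (V, E)$ with $|V| = n$, I would introduce the two-sorted domain $U = V^\ell \cup V$ and define the binary relation $R \subseteq U \times U$ by $R(x,z) \iff x = (x^{(1)}, \ldots, x^{(\ell)}) \in V^\ell$, $z \in V$, and $x^{(j)} \notin N[z]$ for every $j \in [\ell]$. This yields $|R| \leq \sum_{z \in V}(n - |N[z]|)^\ell \leq n^{\ell + 1}$, so the database has $m = O(n^{\ell + 1})$ tuples. Since $R(x, z) = 0$ whenever $z \notin V$, the existential $z$ of $q^\star_k$ effectively ranges over $V$, and a short unfolding shows that the answer count equals the number of tuples $(x_1, \ldots, x_k) \in (V^\ell)^k$ whose combined $k\ell$ entries fail to dominate $G$. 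Therefore $G$ has a dominating set of size at most $k'$ iff this count is strictly less than $n^{k'}$, since any smaller dominating set can be padded by arbitrary vertices into a dominating multiset of size exactly $k'$.

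Running the assumed $O(m^{k-\epsilon})$ algorithm on this database thus decides $k'$-Dominating Set in time at most $O(n^{(\ell+1)(k-\epsilon)})$, and the choice $\ell \geq k/\epsilon$ is exactly what makes
\[
(\ell+1)(k-\epsilon) \;=\; \ell k - \epsilon + (k - \ell\epsilon) \;\leq\; \ell k - \epsilon \;=\; k' - \epsilon,
\]
delivering the required bound $O(n^{k'-\epsilon})$.

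The main subtlety I expect is pinning down the right $\ell$: the blow-up of $n^{\ell+1}$ in the database size must be absorbed by the fact that we now solve a larger ($k\ell$ vs.\ $k$) dominating set instance, and the threshold $\ell = \lceil k/\epsilon \rceil$ is forced by this trade-off. The rest of the construction is a mild adaptation of the non-closed-neighborhood encoding used in~\cite{DurandM14,DellRW19}: $R$ encodes ``not dominated'', the existential $z$ plays the role of an undominated witness, and the count of star-query answers is the number of non-dominating tuples.
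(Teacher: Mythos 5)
Your proposal is correct and follows essentially the same route as the paper: encode $\ell=k'/k$ vertices into each free variable via a ``not in the closed neighborhood'' relation, observe that the answers are exactly the non-dominating tuples, and pick $\ell\ge k/\epsilon$ so that $(\ell+1)(k-\epsilon)\le k'-\epsilon$. The only (immaterial) difference is that the paper fixes $k'>k^2/\epsilon$ and gets a strict inequality in the exponent, while you take $\ell=\lceil k/\epsilon\rceil$ and get a non-strict one, which suffices.
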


We remark that a very similar lower bound has already been shown in~\cite{DellRW19} but there the lower bound is for $n^{k-\epsilon}$ where $n$ is the size of the domain and not $m^{k-\epsilon}$. Moreover, there is was assumed that $k\ge 3$. We slightly improve the bound here by encoding several vertices of the graph in each variable of the query instead of just encoding a single vertex per variable.

\begin{proof}[Proof of Lemma~\ref{lem:stars}]
 Choose $k'$ as a fixed integer such that $k'>k^2/\epsilon$ and $k'$ is divisible by $k$. We will encode $k'$-DS into the query $q^\star_k$. To this end, let $G=(V, E)$ be an input for $k'$-DS. Set
 \[R:=\{(\vec{u}, v)\mid v\in V, \vec{u} = (u_1,\ldots, u_{k'/k}), \forall i \in [k'/k]: u_iv\notin E, u_i \ne v\}.\]
 Then any assignment $\vec{u}^1, \ldots, \vec{u}^k$ to $x_1, \ldots, x_k$ in $q(x,y)$ corresponds to a choice $S$ of at most $k'$ vertices in $G$. The set $S$ is a dominating set if and only if there is no vertex $v$ in $V$ that is not in $S$ and has no neighbor in $S$. This is the case if and only if there is no $v\in V$ that assigned to $z$ makes $q^\star_k$ true. Thus the answers to $q^\star_k(x_1, \ldots, x_k)$ are exactly the assignments $\vec{u}^1, \ldots, \vec{u}^k$ that do \emph{not} correspond to dominating sets in $G$. Hence, any algorithm counting the answers to $q^\star_k(x_1, \ldots, x_k)$ directly yields an algorithm for $k'$-DS.
 
 We now analyze the runtime of the above algorithm. Note that the time for the construction of $R$ is negligible, so the runtime is essentially that of the counting algorithm for $q^\star_k(x_1, \ldots, x_k)$. First observe that the relation $R$ has at most $n^{\frac{k'}{k}+1}$ tuples. The exponent of the runtime of the counting algorithm is thus
 \begin{align*}
  \left(\frac{k'}{k}+1\right)(k-\epsilon) &= k'+k-\frac{k' \epsilon}{k} - \epsilon\\
  & < k'+k - \frac{k^2 \epsilon}{\epsilon k} -\epsilon\\
  & = k'-\epsilon
 \end{align*}
where the inequality comes from the choice of $k'$ satisfying $k'> k^2/\epsilon$.
\end{proof}

\begin{corollary}
  If SAT has no algorithm with runtime $O(2^{n(1-\epsilon)})$ for any $\epsilon>0$, then there is no constant $k\in \mathbb{N}$, $k\ge 2$ and no $\epsilon' > 0$ such that there is an algorithm that counts answers to \[q^\star_k(x_1,\ldots, x_k) := \exists z \bigwedge_{i\in [k]}R(x_i,z)\] in time $O(m^{k-\epsilon'})$ on databases with $m$ tuples. 
\end{corollary}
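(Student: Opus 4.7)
The plan is to derive the corollary essentially as an immediate consequence of Lemma~\ref{lem:stars} chained with Theorem~\ref{thm:PatrascuW10}. I will proceed by contraposition: assume that for some constant $k\ge 2$ and some $\epsilon' > 0$ there is an algorithm counting the answers to $q^\star_k$ in time $O(m^{k-\epsilon'})$ on databases with $m$ tuples, and then show that SAT admits an algorithm with runtime $O(2^{n(1-\epsilon)})$ for some $\epsilon > 0$.

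Given the hypothetical counting algorithm, Lemma~\ref{lem:stars} immediately supplies a constant $k' \in \mathbb{N}$ (the one produced in the lemma's construction, satisfying $k' > k^2/\epsilon'$ and $k \mid k'$) together with an algorithm that decides $k'$-DS in time $O(n^{k'-\epsilon'})$ on graphs with $n$ vertices. Since $\epsilon'$ is a fixed constant independent of $k'$, this is precisely the kind of algorithm ruled out by the conclusion of Theorem~\ref{thm:PatrascuW10}: it exhibits a constant $\epsilon'>0$ and a constant $k'$ such that $k'$-DS admits an $O(n^{k'-\epsilon'})$ algorithm.

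Applying the contrapositive of Theorem~\ref{thm:PatrascuW10}, we conclude that SAT must have an algorithm running in time $O(2^{n(1-\epsilon)})$ for some $\epsilon > 0$, contradicting the hypothesis of the corollary. Hence no such $k$ and $\epsilon'$ can exist, which is exactly the statement to be proved.

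There is no real obstacle here; the work has already been done in Lemma~\ref{lem:stars}, and the corollary is just the standard packaging that converts the reduction into a conditional lower bound under the SAT hypothesis. The only thing to be careful about is that the $\epsilon'$ in the assumed counting algorithm is the \emph{same} constant that appears in the resulting $k'$-DS algorithm (Lemma~\ref{lem:stars} preserves it verbatim), so that Theorem~\ref{thm:PatrascuW10} applies cleanly.
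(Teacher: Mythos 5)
Your proposal is correct and is exactly the argument the paper intends: the corollary is stated without proof precisely because it follows immediately by chaining Lemma~\ref{lem:stars} with (the contrapositive of) Theorem~\ref{thm:PatrascuW10}, as you do. Your remark that the constant $\epsilon'$ is carried through the reduction unchanged is the one point worth being explicit about, and you handle it correctly.
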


\section{Tight Bounds for Acyclic Queries}

In this section, we will lift Lemma~\ref{lem:stars} to self-join free acyclic queries. The approach is rather standard, see e.g.~\cite{BraultBaron13,DurandM14,DellRW19}, but we here consider somewhat tighter runtime bounds so we give it here for completeness. We closely follow the theory as developed in~\cite{BraultBaron13}. The argument is a minimal adaption of one found in the (unfortunately unpublished) full version of~\cite{BaganDG07} for enumeration. An acyclic conjunctive query with hypergraph $\mathcal{H}$ and free variables $S$ is called \emph{free-connex} if the hypergraph $\mathcal{H}\cup \{S\}$ that we get from $\mathcal{H}$ by adding $S$ as an edge is acyclic as well.

\begin{theorem}\label{thm:embedding}
 Let $q$ be a self-join free conjunctive query that is acyclic but not free-connex. Then, assuming that SAT has no algorithm with runtime $O(2^{n(1-\epsilon)})$ for any $\epsilon>0$, there is no algorithm that counts the solutions of $q$ on a database with $m$ tuples in time $O(m^{2-\epsilon'})$ for any $\epsilon'$. 
\end{theorem}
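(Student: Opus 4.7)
The plan is to reduce the $2$-star query $q^\star_2$ to $q$, so that Lemma~\ref{lem:stars} (at $k=2$) together with the SAT assumption rule out $O(m^{2-\epsilon'})$ counting algorithms for $q$. Concretely, given a database $D$ of size $m$ encoding an instance of $q^\star_2$, we construct a database $D'$ for $q$ of size $O(m)$ and a constant $c$ (depending only on $q$) such that the number of answers to $q$ on $D'$ equals $c$ times the number of answers to $q^\star_2$ on $D$. Any $O(m^{2-\epsilon'})$ counting algorithm for $q$ would then transfer directly to $q^\star_2$, contradicting Lemma~\ref{lem:stars}.

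The first ingredient is the structural characterization of free-connex acyclicity going back to Bagan, Durand and Grandjean and sharpened by Brault-Baron. Since $q$ is acyclic but not free-connex, there exist two free variables $y_1, y_2$ and a quantified variable $z$ together with a path of atoms $A_0, A_1, \ldots, A_\ell$ in a join tree of $q$ forming a ``free-connex obstruction'': $y_1$ appears in $A_0$, $y_2$ in $A_\ell$, $z$ appears in every $A_i$ along the path, and the path is chordless in the sense that no free variable other than $y_1, y_2$ appears throughout. This is precisely what prevents adding the edge $S$ of free variables to the hypergraph while preserving acyclicity.

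Given such an obstruction path, we build $D'$ by padding. Fix a fresh constant $\star$ and force every variable of $q$ other than $y_1, y_2, z$ to take value $\star$ in every answer. For each atom of $q$ not on the obstruction path, include the single all-$\star$ tuple so the atom is trivially satisfiable; because $q$ is self-join free, distinct atoms use distinct relation symbols and do not interfere. For atoms along the obstruction path, fill positions corresponding to $y_1, y_2, z$ from the tuples of $D$ (according to whether the atom lies on the $y_1$--$z$ half or the $y_2$--$z$ half of the path) and pad the remaining positions with $\star$. Any answer to $q$ on $D'$ is then forced to assign $\star$ to every variable outside $\{y_1, y_2, z\}$, and the induced assignment to $(y_1, y_2)$ ranges exactly over the answers of $q^\star_2$ on $D$, each with a uniform multiplicity $c$ determined by the free variables of $q$ not on the path (which contribute a fixed, computable factor).

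The main obstacle is isolating a clean obstruction and verifying that the padding yields an exact count rather than merely a decision reduction. One must ensure that $y_1$ and $y_2$ do not co-occur with $z$ in a single atom (so that $q^\star_2$ embeds faithfully along the obstruction rather than trivially), that the chosen path is minimal so the constant $c$ is well-defined, and that no free variable on a padded atom accidentally gains extra valuations that would inflate the count. Once the Brault-Baron-style path is chosen correctly and the padding is consistent on both the obstruction atoms and the rest of $q$, the reduction runs in linear time and produces a database of size $O(m)$, so the claimed runtime lower bound follows from Lemma~\ref{lem:stars}.
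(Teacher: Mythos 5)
Your overall strategy---embed $q^\star_2$ into $q$ via an obstruction to free-connexity, pad all other variables with a constant, and invoke Lemma~\ref{lem:stars} at $k=2$---is the same as the paper's. However, the structural lemma you invoke is not correct, and this breaks the reduction. You claim that a non-free-connex acyclic query admits a path of atoms $A_0,\ldots,A_\ell$ with free variables $y_1\in A_0$, $y_2\in A_\ell$ and a \emph{single} quantified variable $z$ appearing in \emph{every} $A_i$. Consider $q(x,y)=\exists z_1\exists z_2\, R_1(x,z_1)\wedge R_2(z_1,z_2)\wedge R_3(z_2,y)$: it is acyclic but not free-connex (adding the edge $\{x,y\}$ creates a chordless $4$-cycle), yet no quantified variable occurs in more than two atoms, and no single quantified variable co-occurs with both $x$ and $y$ along a connected set of atoms. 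So your obstruction simply does not exist here, and your padding---which forces every variable outside $\{y_1,y_2,z\}$ to $\star$ and reads the star center off a shared variable $z$---has nothing to attach the two halves of $q^\star_2$ to.

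The correct structural fact (Brault-Baron's Theorem~13, as used in the paper) yields a chordless path of \emph{variables} $x_1, z_1,\ldots,z_\ell, x_2$ in the hypergraph with $x_1,x_2$ free and all internal vertices $z_1,\ldots,z_\ell$ quantified and pairwise distinct. The missing idea in your proposal is how to collapse this chain of distinct quantified variables into a single star center: for every atom containing a consecutive pair $z_i,z_{i+1}$ one sets the relation so that its projection to $(z_i,z_{i+1})$ is the diagonal $\{(v,v)\mid v\in D_z\}$, forcing all $z_i$ to take a common value, while the atoms containing $x_1,z_1$ and $x_2,z_\ell$ carry the two copies of $R^{\mathbb D}$. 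Without this equality gadget your reduction only covers the special case $\ell=1$ (a single shared quantified variable), so the proof as written does not establish the theorem for general acyclic non-free-connex queries. The remaining ingredients of your argument (linear-size construction, multiplicity bookkeeping---which the paper avoids by fixing a single constant $d$ so that the projection to $(x_1,x_2)$ is a bijection and $c=1$) are fine once the path is handled correctly.
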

\begin{proof}
 The idea is to embed $q^\star_2$ into $q$. Let $S$ be the set of free variables of $q$ and let $\mathcal{H}$ be the hypergraph of $q$. Since $q$ is acyclic but not free-connex, we get by \cite[Theorem 13]{BraultBaron13} that there is a set $S'$ such that $(\mathcal{H}\cup \{S\})[S']$ is a cycle plus potentially some unary edges. Note that acyclic hypergraphs cannot contain such an induced cycle, so we it follows that $(\mathcal{H}\cup \{S\})[S']$ contains at least one edge $xx'$ which is not in $\mathcal{H}[S']$. Moreover, since the cycle is induced, it must contain a vertex not in $S$. It follows that we can choose a path $P$ in $\mathcal{H}[S']$ such that the endpoints of $P$ are in $S$ but none of the internal vertices are. Let $x_1, x_2$ be the endpoints and let $z_1, \ldots, z_\ell$ be the internal vertices.
 
 Let $\mathbb{D}$ be an input database for $q^\star_2$. We construct a database $\mathbb{D}'$ for $q$ as follows: all vertices not on $P$ are forced to take a fixed value $d$ in all relations. If there are any unary atoms $U$ on any variable on $P$, we make sure that those do not constrain the variables: if the atom is in the variable $x_i$, we set $U^{\mathbb{D}'}$ to the active domain of $x_i$ in $\mathbb{D}$. If the atom is in any $z_i$, we set $U^{\mathbb{D}'}$ to the active domain of $z$ in $\mathbb D$. Moreover, we construct the following:
 \begin{itemize}
 \item For every atom $R_1$ containing $x_1$ and $z_1$, we construct a relation $R_1$ such that the projection $\Pi_{x_1, y_1}(R_1)$ is $R^{\mathbb{D}}$, the relation for $R(x_1, z)$ in $q^\star_2$. 
 \item We set the relations for atoms containing $x_2$ and $z_\ell$ analogously. 
 \item Finally, for every atom of $q$ that contains a pair $z_i, z_{i+1}$, we construct a relation $R$ such that the projection $\Pi_{z_i, z_{i+1}}(R)= \{(v,v)\mid v\in D_z\}$ where $D_z$ is the active domain of $z$ in $\mathbb{D}$.
 \end{itemize}
Note that, for every atom of $q$ containing a variable on $P$, exactly one of the cases applies. It follows that $\mathbb{D}'$ has linear size in $\|\mathbb D\|$ and that it can be constructed in quasilinear time. Also, observe that $\Pi_{x_1, x_2}(q(\mathbb{D}')) = q^\star_2(\mathbb{D})$ and that this projection is a bijection. It follows that $|q(\mathbb{D}')| = |q^\star_2(\mathbb{D})|$, so counting for $q$ directly allows counting for $q^\star_2$.
\end{proof}

Note that it is well known that for every free-connex acyclic CQ the answers can be counted in linear time, see e.g.~\cite{BraultBaron13,CarmeliZBKS20}, so we get that being free-connex completely characterizes the linear time case for acyclic CQs in the following sense.

\begin{corollary}
 Assume that SAT has no has no algorithm with runtime $O(2^{n(1-\epsilon)})$ for any $\epsilon>0$. Let $q$ be a self-join free acyclic CQ. Then there is an algorithm that counts the answers of $q$ for every database in linear time if and only if $q$ is free-connex.
\end{corollary}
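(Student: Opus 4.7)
The corollary is a biconditional that essentially packages two results that are both already in place by this point in the paper, so my plan is to split it into its two directions and point at the right references for each.

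For the "if" direction (free-connex implies linear time), I would just cite the known result: for every free-connex acyclic CQ, the answers can be counted in linear time in the size of the database, as shown in \cite{BraultBaron13,CarmeliZBKS20}. There is nothing to prove here; the paper has already flagged this fact in the paragraph preceding the corollary, so one sentence with the citation suffices.

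For the "only if" direction (linear time implies free-connex), I would argue the contrapositive: suppose $q$ is a self-join free acyclic CQ that is not free-connex. Then Theorem~\ref{thm:embedding} applies and, under the stated SAT assumption, rules out any counting algorithm running in time $O(m^{2-\epsilon'})$ for any $\epsilon' > 0$. A linear time algorithm runs in time $O(m) = O(m^{2-1})$, which is of the form $O(m^{2-\epsilon'})$ with $\epsilon' = 1 > 0$. Hence no linear time counting algorithm exists, contradicting the assumption.

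There is essentially no obstacle: the heavy lifting is done by Theorem~\ref{thm:embedding}, and the only subtlety to make explicit is the trivial observation that $O(m)$ falls within the class $O(m^{2-\epsilon'})$ excluded by that theorem (choosing $\epsilon' = 1$). I would keep the proof to three or four sentences, stating both directions and invoking Theorem~\ref{thm:embedding} and the free-connex linear time algorithm respectively, without re-deriving anything.
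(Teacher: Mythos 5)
Your proposal is correct and matches the paper's (implicit) argument exactly: the paper states the corollary as an immediate consequence of Theorem~\ref{thm:embedding} together with the cited linear-time counting algorithm for free-connex acyclic CQs, which is precisely your two-direction decomposition. The one detail you make explicit---that $O(m)$ is $O(m^{2-\epsilon'})$ with $\epsilon'=1$---is the right observation and is all that is needed.
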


We remark in passing that Theorem~\ref{thm:embedding} can be generalized beyond quadratic runtime lower bounds. There is a quantitative generalization of being non-free-connex called \emph{quantified star size} that essentially measures under which conditions $q^\star_k$ can be embedded into a query. Inspection of the proof of Lemma~4.3 from~\cite{DurandM15} shows that the following is true (we refer to~\cite{DurandM15} for definitions and all details).

\begin{corollary}
 Let $q$ be a self-join free conjunctive query of quantified star size $k$. Then, assuming that SAT has no algorithm with runtime $2^{n(1-\epsilon)}$ for any $\epsilon>0$, there is no algorithm that counts the solutions of $q$ on a database with $m$ tuples in time $m^{k-\epsilon'}$ for any $\epsilon'$. 
\end{corollary}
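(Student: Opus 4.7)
The plan is to generalize the reduction in Theorem~\ref{thm:embedding} from $k=2$ to arbitrary $k\ge 2$, using Lemma~\ref{lem:stars} (rather than its $k=2$ special case) as the hardness source and invoking the structural content of \cite[Lemma 4.3]{DurandM15} to supply the skeleton into which $q^\star_k$ is embedded.

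First, I would unpack what quantified star size $k$ gives us: structurally, it guarantees that inside the hypergraph of $q$ one can locate $k$ free variables $x_1,\ldots,x_k$ together with an existentially quantified variable $z$ and $k$ internally vertex-disjoint, suitably ``induced'' paths $P_1,\ldots,P_k$ joining each $x_i$ to $z$ through existentially quantified interior vertices. This is precisely the $k$-star skeleton that the construction in the proof of Lemma~4.3 of~\cite{DurandM15} produces, and it plays the same role here as the induced cycle extracted from \cite[Theorem 13]{BraultBaron13} in the proof of Theorem~\ref{thm:embedding}.

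Next, given a database $\mathbb{D}$ for $q^\star_k$, I would build a database $\mathbb{D}'$ for $q$ by the obvious $k$-arm generalization of the recipe in Theorem~\ref{thm:embedding}: every variable of $q$ not appearing on any $P_i$ is frozen to a single dummy value; unary atoms on the path variables are neutralized by setting their relation to the appropriate active domain of $\mathbb{D}$; and along each arm $P_i$ we use the projection trick to realize $R^{\mathbb{D}}(x_i,z)$---the atom touching $x_i$ encodes the pair on its $(x_i,\cdot)$-projection, the intermediate atoms project to the diagonal $\{(v,v)\mid v\in D_z\}$ on consecutive interior variables, and the atom touching $z$ identifies the final interior variable with $z$. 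Exactly as in Theorem~\ref{thm:embedding}, $\mathbb{D}'$ has size linear in $\|\mathbb{D}\|$, is built in quasilinear time, and $\Pi_{x_1,\ldots,x_k}(q(\mathbb{D}'))$ is in bijection with $q^\star_k(\mathbb{D})$, so $|q(\mathbb{D}')|=|q^\star_k(\mathbb{D})|$. Any $O(m^{k-\epsilon'})$ counting algorithm for $q$ would therefore translate to one for $q^\star_k$, contradicting Lemma~\ref{lem:stars} under the SAT assumption.

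The main obstacle is purely combinatorial and sits in the first step: one must argue that the $k$-armed star skeleton really exists, with arms internally disjoint, interiors fully existential, and the induced sub-hypergraph on the paths tame enough that no leftover atoms impose unwanted constraints when we freeze the remaining variables. This is exactly what Lemma~4.3 of~\cite{DurandM15} is designed to prove, which is why inspection of that proof suffices; once the skeleton is in hand, the database construction and the counting bijection are mechanical extensions of the $k=2$ argument already carried out above.
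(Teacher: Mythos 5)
Your proposal is correct and follows essentially the same route as the paper, which gives no explicit proof here but simply states that the result follows by combining the improved star lower bound of Lemma~\ref{lem:stars} with an inspection of the embedding construction in Lemma~4.3 of~\cite{DurandM15}. Your sketch merely spells out what that inspection yields (the $k$-armed skeleton and the diagonal/freezing construction generalizing Theorem~\ref{thm:embedding}), so it matches the intended argument.
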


\section{Conclusion}

We have slightly improved some of the bounds from~\cite{DellRW19} here by improving the dependence of the runtime bounds from the size of the domain to the size of the database. This makes these bounds connect better to other results in database theory. Let us stress that we have only considered a very small portion of the results in the very rich paper~\cite{DellRW19}. It would be interesting to see which other results from there can be adapted in this way.

One compelling question that we leave open is fully determining the queries for which---under a standard complexity assumption---answers can be counted in linear time. Note that, combining our result with the work of Brault-Baron in~\cite{BraultBaron13}, this question reduces to showing lower bounds for the so-called Loomis-Whitney joins. Brault-Baron conjectured that those queries do not have a linear time counting algorithms and from this and our result it would follow that the queries that allow linear time counting are exactly the free-connex acyclic queries of~\cite{BaganDG07}. We note that there are reasons to consider Brault-Baron's conjecture as plausible, see the discussion in~\cite[Section~6]{BerkholzGS20}. However, it would still be interesting to find more evidence for it.
Unfortunately, doing so looks quite challenging: even for the question of counting triangles in a graph, the smallest Loomis-Whitney join, there are no known conditional lower bounds under standard assumptions, despite the fact that triangles play a major role in fine-grained complexity, see e.g.~\cite{williams2018some}.

\paragraph{Acknowledgement.} This note would not have been written without a question Nicole Schweikardt and it would not have been published without another question by Nofar Carmeli. The author thanks Nofar Carmeli and Nikolaos Tziavelis for helpful discussions and their generous advice that helped greatly improve upon an earlier version of this note.

\bibliographystyle{plain}
\bibliography{lintime}

\begin{thebibliography}{10}

\bibitem{AbiteboulHV95}
Serge Abiteboul, Richard Hull, and Victor Vianu.
\newblock {\em Foundations of Databases}.
\newblock Addison-Wesley, 1995.

\bibitem{BaganDG07}
Guillaume Bagan, Arnaud Durand, and Etienne Grandjean.
\newblock On acyclic conjunctive queries and constant delay enumeration.
\newblock In {\em Computer Science Logic, 21st International Workshop, {CSL}
  2007, 16th Annual Conference of the EACSL}, volume 4646 of {\em Lecture Notes
  in Computer Science}, pages 208--222. Springer, 2007.

\bibitem{BerkholzGS20}
Christoph Berkholz, Fabian Gerhardt, and Nicole Schweikardt.
\newblock Constant delay enumeration for conjunctive queries: a tutorial.
\newblock {\em {ACM} {SIGLOG} News}, 7(1):4--33, 2020.

\bibitem{BraultBaron13}
Johann Brault{-}Baron.
\newblock {\em De la pertinence de l'{\'{e}}num{\'{e}}ration : complexit{\'{e}}
  en logiques propositionnelle et du premier ordre.}
\newblock PhD thesis, University of Caen Normandy, France, 2013.

\bibitem{CarmeliZBKS20}
Nofar Carmeli, Shai Zeevi, Christoph Berkholz, Benny Kimelfeld, and Nicole
  Schweikardt.
\newblock Answering (unions of) conjunctive queries using random access and
  random-order enumeration.
\newblock In {\em 39th {ACM} {SIGMOD-SIGACT-SIGAI} Symposium on Principles of
  Database Systems, {PODS} 2020}, pages 393--409, 2020.

\bibitem{ChandraM77}
Ashok~K. Chandra and Philip~M. Merlin.
\newblock Optimal implementation of conjunctive queries in relational data
  bases.
\newblock In {\em 9th Annual {ACM} Symposium on Theory of Computing, STOC
  1977}, pages 77--90, 1977.

\bibitem{ChenM15}
Hubie Chen and Stefan Mengel.
\newblock A trichotomy in the complexity of counting answers to conjunctive
  queries.
\newblock In {\em 18th International Conference on Database Theory, {ICDT}
  2015}, volume~31 of {\em LIPIcs}, pages 110--126, 2015.

\bibitem{DellRW19}
Holger Dell, Marc Roth, and Philip Wellnitz.
\newblock Counting answers to existential questions.
\newblock In {\em 46th International Colloquium on Automata, Languages, and
  Programming, {ICALP} 2019}, volume 132 of {\em LIPIcs}, pages 113:1--113:15,
  2019.

\bibitem{DurandM14}
Arnaud Durand and Stefan Mengel.
\newblock The complexity of weighted counting for acyclic conjunctive queries.
\newblock {\em J. Comput. Syst. Sci.}, 80(1):277--296, 2014.

\bibitem{DurandM15}
Arnaud Durand and Stefan Mengel.
\newblock Structural tractability of counting of solutions to conjunctive
  queries.
\newblock {\em Theory Comput. Syst.}, 57(4):1202--1249, 2015.

\bibitem{GrecoS14}
Gianluigi Greco and Francesco Scarcello.
\newblock Counting solutions to conjunctive queries: structural and hybrid
  tractability.
\newblock In {\em 33rd {ACM} {SIGMOD-SIGACT-SIGART} Symposium on Principles of
  Database Systems, PODS'14}, pages 132--143, 2014.

\bibitem{ImpagliazzoP01}
Russell Impagliazzo and Ramamohan Paturi.
\newblock On the complexity of k-sat.
\newblock {\em J. Comput. Syst. Sci.}, 62(2):367--375, 2001.

\bibitem{PatrascuW10}
Mihai Patrascu and Ryan Williams.
\newblock On the possibility of faster {SAT} algorithms.
\newblock In {\em Twenty-First Annual {ACM-SIAM} Symposium on Discrete
  Algorithms, {SODA} 2010}, pages 1065--1075, 2010.

\bibitem{PichlerS13}
Reinhard Pichler and Sebastian Skritek.
\newblock Tractable counting of the answers to conjunctive queries.
\newblock {\em J. Comput. Syst. Sci.}, 79(6):984--1001, 2013.

\bibitem{williams2018some}
Virginia~Vassilevska Williams.
\newblock On some fine-grained questions in algorithms and complexity.
\newblock In {\em Proceedings of the International Congress of Mathematicians:
  Rio de Janeiro 2018}, pages 3447--3487. World Scientific, 2018.

\end{thebibliography}
\end{document}